\newcommand{\al}{\alpha}
\newcommand{\be}{\beta}
\newcommand{\de}{\delta}
\newcommand{\ep}{\epsilon}
\newcommand{\si}{\sigma}
\renewcommand{\th}{\theta}
\newcommand{\De}{\Delta}
\newcommand{\ba}{\mathbf a}
\newcommand{\bb}{\mathbf b}
\newcommand{\cA}{{\mathcal A}}
\newcommand{\cB}{{\mathcal B}}
\newcommand{\id}{1\hspace{-.25em}{\rm I}}
\newcommand{\ket}[1]{|#1\rangle}
\newcommand{\bra}[1]{\langle#1|}
\newcommand{\mss}{\kern 1pt}
\renewcommand{\leq}{\leqslant}
\renewcommand{\geq}{\geqslant}
\renewcommand{\le}{\leqslant}
\renewcommand{\ge}{\geqslant}
\newcommand{\tends}[1]{\bbuildrel{\hbox to 2em{\rightarrowfill}}_{#1}^{}}
\newcommand{\tr}{\operatorname{tr}}
\newcommand{\e}{\mathrm e}
\newcommand{\ie}{\textit{i.e. }}
\newcommand{\Int}[1]{\,\mathop{\!#1}\limits^{\lower1ex\hbox{$\scriptstyle\circ$}}{}}
\newtheorem{thm}{Theorem}
\newtheorem{defi}{Definition}
\newtheorem{prop}{Proposition}
\newtheorem{lemma}{Lemma}
\theoremstyle{definition}
\theoremstyle{remark}
\begin{document}
\title{Highly-entangled, highly-doped states that are efficiently cross-device verifiable}
\author{Janek Denzler}
\affiliation{Dahlem Center for Complex Quantum Systems, Freie Universit\"at Berlin, 14195 Berlin, Germany\looseness=-1}
\author{Santiago Varona}
\affiliation{Instituto de F\'isica Te\'orica, UAM-CSIC, Universidad Aut\'onoma de Madrid, Cantoblanco, 28049 Madrid, Spain}
\author{Tommaso Guaita}
\author{Jose Carrasco}
\email[Corresponding author:\,]{jose.carrasco@fu-berlin.de}
\affiliation{Dahlem Center for Complex Quantum Systems, Freie Universit\"at Berlin, 14195 Berlin, Germany\looseness=-1}
\date{\today}
\begin{abstract}
In this paper, we introduce a class of highly entangled real quantum states that cannot be approximated by circuits with $\log$-many non-Clifford gates and prove that Bell sampling enables efficient cross-device verification (or distributed inner product estimation) for these states. That is, two remote parties can estimate the inner product $\tr(\rho\si)$, each having black-box access to copies of a state $\rho$ (or respectively~$\si$) in this class. This is significant because it is clear that this task can be achieved in those cases (such as low entanglement or low non-Clifford gate count) where one can independently learn efficient classical descriptions of each state using established techniques and share the description to compute the overlap. Instead, our results demonstrate that this is possible even in more complex scenarios where these ``learn and share'' methods are insufficient. Our proposal is scalable, as it just requires a number of two-copy Bell measurements and single-copy Pauli measurements that grows polynomially with both the number of qubits and the desired inverse-error, and can be implemented in the near term. Moreover, the required number of samples can be efficiently experimentally determined by the parties in advance, and our findings are robust against preparation errors. We anticipate that these results could have applications in quantum cryptography and verification.
\end{abstract}
\maketitle
\section{Introduction}
As quantum computing advances, ensuring the reliability of implemented quantum states remains a key challenge~\cite{EHW20}. This includes verifying the similarity of states prepared across different devices. Standard benchmarking techniques, such as quantum state tomography~\cite{OW16} and direct fidelity estimation~\cite{DL11,FL11}, rely on first classically learning and sharing state descriptions to then assess the states' similarity. However, these “learn and share” methods are inherently 
limited to states that can be efficiently represented classically, such as those with low entanglement~\cite{SWVC08} or requiring only a few non-Clifford gates to prepare~\cite{CLL24,Leone2024learningtdoped}. At the same time, other solutions based on interactive proofs and post-quantum cryptography ({\em e.g.}~\cite{Ma18,BKVV20,BCMVV21}) require resources that are far from being reached in the near future.
 
In this work, we explore how two parties can assess the similarity of their quantum states efficiently in practically relevant scenarios, without requiring to reveal an efficient description of their structure (see~Fig.~\ref{fig:pic1}). This is also relevant outside the benchmarking field and for applications in a fault-tolerant setting. Indeed, it would allow secure quantum verification, where parties can confirm that they hold the same quantum state without revealing its content.

\begin{figure}[ht]
\includegraphics[width=.9\linewidth]{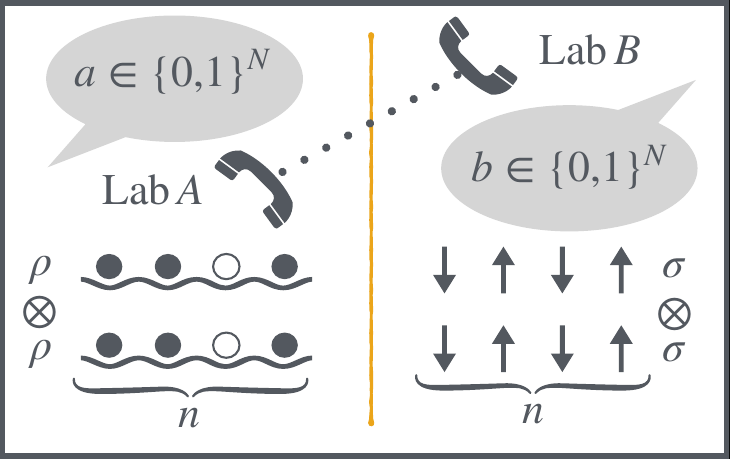}
\caption{\raggedright \small Two distant quantum devices potentially operating on different platforms want to estimate the inner product $\tr(\rho\sigma)$ using only LOCC. The parties exchange $N$ bits of information $a,b$, which allows to construct an estimator for the inner product. The protocol is efficient if it is sufficient to choose $N$ polynomial in the system size $n$. For low-entangled or low-doping states, $a,b$ can be just compact classical descriptions that allow the reconstruction of $\rho,\si$. In this work we ask: is it possible to do efficient cross-device verification for states without sharing classical descriptions, \ie with highly-entangled and highly-doped states?}\label{fig:pic1}
\end{figure}

Cross-device verification~\cite{El20,CEKKZ21,ALL22} requires estimating the overlap between states prepared on different platforms. The Hadamard test enables this via quantum communication, and recent works \cite{Kn23, Go24, Ar24} have also explored the prospect of inner product estimation with limited quantum communication. While this is a valid strategy, it is often impractical, particularly for devices based on different technologies or located at distant places. Instead, doing this just by local measurements combined with classical communication (LOCC) leads to the problem of distributed inner product estimation (DIPE). Existing DIPE methods, such as those based on randomized measurements and classical shadows, avoid explicit state learning but face exponential complexity for generic states~\cite{El20,NN24}.

A promising alternative is cross-device verification via Pauli sampling~\cite{HIJLEC24}, which also estimates  this overlap without learning state descriptions. Crucially, Pauli sampling tailors the measurement schedule to the input states, making it efficient for certain families of states, depending on properties of their Pauli distribution. We demonstrate that via Pauli sampling, DIPE is efficient in cases where traditional ``learn and share'' strategies fail. Specifically, we introduce a class of highly entangled, highly doped real quantum states that are inaccessible to classical learning techniques. Our findings show that LOCC-based DIPE is feasible even when tomography is impractical, enabling verification in challenging quantum computing scenarios. Therefore, our approach is a step towards confidential and efficient verification, allowing parties to estimate state overlaps without revealing enough information for reconstruction. This is a crucial advancement both for benchmarking applications and for the development of secure quantum communication and cryptographic protocols.

One limitation is that our method requires states to be real. However, we show that it remains effective under realistic noise conditions: if a state is close to a real target pure state in trace norm, the protocol still yields accurate overlap estimates. This robustness ensures practicality despite experimental imperfections, such as small non-real gate errors. Additionally, for near-pure real states, remote parties can predict the number of samples needed to achieve a desired accuracy, enabling efficient and reliable verification. In conclusion, our work bridges the gap between efficiency and confidentiality in cross-device verification, providing a scalable solution for efficient and secure quantum state validation in both near-term and fault-tolerant quantum technologies.

\section{The ${\rm rDIPE}$ algorithm and its complexity} 
In Ref.~\cite{HIJLEC24}, an algorithm for the distributed estimation of the inner product $\tr(\rho\sigma)$ between two pure quantum states was introduced, which only relies on performing one and two copy measurements of either state. We use a version of this algorithm that also works for mixed states, assuming that $\tr\rho^2,\tr\sigma^2 >1/2$. For completeness we provide it in Appendix~\ref{app:rDIPE} (see Algorithm~\ref{alg:rDIPE}),

In order to properly discuss the protocol's efficiency, let us introduce some notation. 
Given a quantum state $\rho$, we denote its Pauli distribution by $p_\rho$ and its Bell distribution by $q_\rho$. That is,
\[
    p_\rho(\ba)=\frac{\langle P_\ba\rangle_\rho^2}{2^n\tr\rho^2}\,,  \hspace{10mm}
    q_\rho(\ba)=\bra{\Phi_\ba}\rho\otimes\rho\ket{\Phi_\ba}\,,
\]
for $\ba\in\{0,1,2,3\}^n$, where $P_{\ba}=P_{a_1}\otimes \cdots \otimes P_{a_n}$ are $n$ qubit Pauli operators and $\ket{\Phi_{\ba}}=(\id\otimes P_{\ba})\ket{\Phi_{\mathbf{0}}}$ are the Bell states, with $\ket{\Phi_{\mathbf{0}}}=1/\sqrt{2^n} \sum_{z\in\{0,1\}^n}\ket{z_1,\dots,z_n}\otimes \ket{z_1,\dots,z_n}$. Notice that if $\rho$ is pure and has real amplitudes then $p_{\rho}=q_{\rho}$~\cite{Mo17}. Then, we denote by $F_\rho$ the cumulative distribution function (CDF) of the random variable $\langle P_{\ba}\rangle_\rho^2$ when $\ba$ is sampled according to $p_\rho$, that is
\begin{equation}
    F_\rho(\ep)=\sum_{\ba }p_\rho(\ba) \, \theta(\ep-\langle P_\ba\rangle_\rho^2) \,, \label{eq:CDF}
\end{equation}
where $\th(t)$ is the indicator function for the set $\{t\ge0\}$. Finally, we will consider sequences of states indexed by the number of qubits and denote these as $\{\rho\}\equiv\{\rho_n\}_n$. 

The protocol ${\rm rDIPE}$, presented in detail in Appendix~\ref{app:rDIPE}, outputs an approximation $f(\rho,\si,N_1,N_2)$ of the quantity
\begin{equation}\label{eq:cos}
{\rm c}(\rho,\si):=\tr(\rho\si)/\sqrt{\tr\rho^2\tr\si^2}\,,
\end{equation}
computed by using using $N_1$ copies of $\rho^{\otimes 2}$ and $\si^{\otimes 2}$ and $N_1N_2$ copies of $\rho$ and $\si$ (see Algorithm~\ref{alg:rDIPE} in Appendix~\ref{app:rDIPE} for a precise definition of this estimator $f$).
In other words, it evaluates the cosine of the angle between $\rho$ and $\si$, as determined by the inner product $\tr(\rho\si)$. We say that ${\rm rDIPE}$ is efficient on inputs $\{\rho\},\{\si\}$ if the following holds~\footnote{Here and in what follows, ${\mathbb P}$ will denote the probability with respect to the internal randomness of rDIPE for fixed inputs.}:

\begin{defi}[Efficiency of ${\rm rDIPE}$]{\ \\}
\label{def:efficiency}
For two sequences $\{\rho\},\{\si\}$, we say that ${\rm rDIPE}$ is efficient on inputs $\{\rho\},\{\si\}$ if there exist integers $N_1,N_2={\rm poly}(n,1/\ep,\de)$
such that for all $n,\ep,\de>0$
\begin{equation}\label{eq:efficient}
    {\mathbb P}\Big\{\left|f(\rho_n,\si_n,N_1,N_2)-{\rm c}(\rho_n,\si_n)\right|>\ep\Big\}<\e^{-\de}.
\end{equation}
\end{defi}

In order to make statements about the efficiency of rDIPE on specific states, we will use a technical result proven in Ref.~\cite{HIJLEC24} which expresses how well the quantity ${\rm c}(\rho,\si)$ of Eq.~\eqref{eq:cos} can be approximated by the output $f(\rho,\si,N_1,N_2)$ of the protocol, 
by relating this to properties of the states' CDFs and Bell distributions.
The proof of this statement applies with minor modifications also to our case of mixed states, showing that for any $\ep_1>0$ and $\ep_2>0$
\begin{multline}\label{eq:performance}
{\mathbb P}\Big\{\left|f(\rho,\si,N_1,N_2)-{\rm c}(\rho,\si)\right|>4\ep_1+4\sqrt{\ep_2}+2F(\ep_2)+6\De\Big\}\\
<4\exp\left(-2\epsilon_1^2 N_1\right)+4N_1\exp\left(-\epsilon_2^2N_2/2\right),
\end{multline}
where $F=(F_\rho+F_\si)/2$ and $\De$ expresses the total variation (TV) distance between the Pauli distribution $p_{\rm mix}=(p_\rho+p_\si)/2$ and the Bell distribution $q_{\rm mix}=(q_\rho+q_\si)/2$, that is $\De={\rm TV}(p_{\rm mix},q_{\rm mix})$.

\section{The ${\rm CW}$ family of states} 
We now present our main results. That is, we introduce the following class of states, for which rDIPE is efficient, that are however not efficiently learnable with MPS or stabilizer-based methods. In the following, we will denote by ${\rm rCl}(n)$ the real Clifford group on $n$ qubits (see Ref.~\cite{HFGW18} for a detailed definition and summary of main properties).

\begin{defi}\label{def:CW}
    Let $n>0$ denote the number of qubits. We define the class ${\rm CW}(n)$ as the family of states of the form $C\ket{W_n}$, where $C$ is a real Clifford unitary on $n$ qubits, and $\ket{W_n}$ is the $n$-qubit $W$-state. Specifically,
    \[
    {\rm CW}(n)=\{\ket\psi:\ket\psi=C\ket{W_n}\,,\,C\in{\rm rCl}(n)\},
    \]
    where $\ket{W_n}=\sum_{|z|=1}c_z\ket{z_1,\ldots,z_n}$ with $c_z=1/\sqrt n$, and the sum extends over all $z\in\{0,1\}^n$ such that $\sum_iz_i=1$.
\end{defi}

   It is worth noting that our results are also valid if we replace the W-state of the previous definition by a Dicke state with a constant number of excitations, or by any other non-permutationally invariant superposition of computational basis states with constant Hamming weight like, {\em e.g.}, those of Ref.~\cite{RSPL24}. More generally, the same is true for states of the form $C\sum_{z\in S}c_z\ket{z_1,\ldots,z_n}$, where $C$ is a (highly-entangling) real Clifford gate, and we take, for instance, $c_z\propto\pm1/\sqrt{{\rm poly}(n)}$ for possibly different polynomials and the sum $z\in S$ running over any $S\subset\{0,1\}^n$ with cardinality $|S|={\rm poly}(n)$. However, for the purposes of this work, we shall favor the simplicity of Definition~\ref{def:CW}.

In the following, when dealing with sequences of states we will write $\{\rho\}\in{\rm CW}$ if $\rho_n\in{\rm CW}(n)$ for all $n$. First of all, we observe that ${\rm rDIPE}$ is efficient on inputs in CW.
\begin{prop}\label{prop:efficiency}
    For all inputs $\{\rho\},\{\si\}\in{\rm CW}$, $\mathrm{rDIPE}$ is efficient in the sense of Definition~\ref{def:efficiency}.
\end{prop}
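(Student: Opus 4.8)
The plan is to apply the performance bound \eqref{eq:performance} and show that all error terms can be made small with polynomially many samples when the inputs lie in CW. The key structural observation is that the $W$-state is a real state with real amplitudes, and real Clifford unitaries preserve realness; hence every $\rho_n=C\ket{W_n}$ is a pure real state, so by the cited fact \cite{Mo17} we have $p_{\rho_n}=q_{\rho_n}$, and likewise for $\si_n$. Consequently $p_{\rm mix}=q_{\rm mix}$ exactly, so the term $\De={\rm TV}(p_{\rm mix},q_{\rm mix})=0$ vanishes identically. This removes the one term in \eqref{eq:performance} that is not controlled by increasing $N_1,N_2$.

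Next I would control the CDF term $F(\ep_2)=(F_\rho(\ep_2)+F_\si(\ep_2))/2$. The point is that the Pauli spectrum of $C\ket{W_n}$ is the same as that of $\ket{W_n}$ up to a permutation of Pauli labels induced by conjugation by $C$ (a real Clifford sends Paulis to signed Paulis), so $F_{C\ket{W_n}}=F_{\ket{W_n}}$ and it suffices to analyze the $W$-state itself. For $\ket{W_n}$ one computes $\langle P_\ba\rangle_{W_n}^2$ directly: because $\ket{W_n}$ is supported on Hamming-weight-one strings, the expectation values $\langle P_\ba\rangle$ are $O(1/n)$ for every $\ba$ except the identity and a handful of low-weight $Z$-type strings, where they are $\Theta(1)$. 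More precisely, the number of $\ba$ with $\langle P_\ba\rangle_{W_n}^2$ bounded below by a constant is itself only $O(n^2)$ (or some fixed polynomial), while $p_{W_n}$ assigns each of these weight $O(1/n)$; hence for any constant threshold $\ep_2$, $F_{W_n}(\ep_2)=1-O(\mathrm{poly}(n)/n)$ — wait, that is the wrong direction. The correct statement to extract is that the random variable $\langle P_\ba\rangle^2$ under $p_{W_n}$ is, with probability $1$, either $0$ or at least some $1/\mathrm{poly}(n)$; equivalently $F_{W_n}(\ep_2)=0$ for all $\ep_2<1/\mathrm{poly}(n)$. This is the crucial ``anti-concentration away from zero'' property: the nonzero squared Pauli expectations of the $W$-state are bounded below by an inverse polynomial. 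I would prove this by an explicit enumeration of $\langle P_\ba\rangle_{W_n}$ over the Pauli types, showing the nonzero values take only $O(1)$ distinct magnitudes, all of the form $\Theta(1)$ or $\Theta(1/n)$.

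With $\De=0$ and $F(\ep_2)=0$ for $\ep_2=1/\mathrm{poly}(n)$ in hand, \eqref{eq:performance} reduces to
\begin{equation*}
{\mathbb P}\Big\{\left|f(\rho,\si,N_1,N_2)-{\rm c}(\rho,\si)\right|>4\ep_1+4\sqrt{\ep_2}\Big\}<4\e^{-2\ep_1^2N_1}+4N_1\e^{-\ep_2^2N_2/2}.
\end{equation*}
Given a target accuracy $\ep$ and confidence parameter $\de$, I would set $\ep_1=\ep/8$, choose $\ep_2=\min\{(\ep/8)^2,\,1/\mathrm{poly}(n)\}$ so that both $4\ep_1\le\ep/2$ and $4\sqrt{\ep_2}\le\ep/2$ while still $F(\ep_2)=0$; then $N_1=\lceil (\de+\ln 8)/(2\ep_1^2)\rceil$ and $N_2$ large enough (polynomial in $n$, $1/\ep$, $\de$, and $\ln N_1$) make the right-hand side at most $\e^{-\de}$. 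All of $N_1,N_2$ are then $\mathrm{poly}(n,1/\ep,\de)$, which is exactly Definition~\ref{def:efficiency}. The main obstacle is the middle step: pinning down the precise inverse-polynomial lower bound on the nonzero squared Pauli expectation values of $\ket{W_n}$ (and checking real-Clifford invariance of this spectrum), since everything else is a routine substitution into the already-established bound \eqref{eq:performance}. One subtlety to be careful about is verifying the protocol's standing hypothesis $\tr\rho^2,\tr\si^2>1/2$ — but this is immediate here since the states are pure, so $\tr\rho^2=\tr\si^2=1$.
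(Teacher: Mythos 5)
Your proposal is correct and follows essentially the same route as the paper: $\De=0$ because real Cliffords preserve the realness of $\ket{W_n}$ (so $p=q$), $F(\ep_2)=0$ for inverse-polynomially small $\ep_2$ via the explicit (Clifford-invariant) Pauli spectrum of the $W$-state, and then substitution into Eq.~\eqref{eq:performance} with $\ep_1=\ep/8$ and $\ep_2=\min\{(\ep/8)^2,1/\mathrm{poly}(n)\}$. One minor slip in your sketch: the nonzero $Z$-type expectations $1-2z_\ba/n$ take $\Theta(n)$ distinct magnitudes, not $O(1)$, but the only fact you actually need --- that every nonzero $\langle P_\ba\rangle^2$ is bounded below by $1/\mathrm{poly}(n)$ --- still holds.
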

\begin{proof}
To see this, for all $n$, $\ep>0$ consider $\ep_1,\ep_2$ given by $\ep_1=\ep/8$ and $\ep_2=\min\{(\ep/8)^2,3/n^2\}$. The main property of the states in ${\rm CW}$ that we will use here is that their CDF $F(\epsilon)$ vanishes when $\epsilon$ is inverse-polynomially small. This is proven in Appendix~\ref{app:CDFs} (see specifically Eq.~\eqref{eq:CDF-W}). In particular, this fact allows us to conclude that $F_{\rho_n}(\ep_2)=F_{\si_n}(\ep_2)=0$, since $\ep_2\le3/n^2<4/n^2$.

It follows that $4\ep_1+4\sqrt{\ep_2}+2F(\ep_2)\leq\ep$, which in turn implies
\begin{align}
    &{\mathbb P}\Big\{\left|f(\rho_n,\si_n,N_1,N_2)-{\rm c}(\rho_n,\si_n)\right|>\ep\Big\} \nonumber\\
    &\hspace{0mm}\leq{\mathbb P}\Big\{\!\!\left|f(\rho_n,\si_n,N_1,N_2)-{\rm c}(\rho_n,\si_n)\right|\!>\!4\ep_1+\!4\sqrt{\ep_2}+\!2F(\ep_2)\!\Big\} \nonumber\\
    &\hspace{4mm}<4\exp\left(-2\epsilon_1^2N_1\right)+4N_1\exp\left(-\epsilon_2^2N_2/2\right), \label{eq:sets-relation_efficiency}
\end{align}
where in the last step we have used Eq.~\eqref{eq:performance}, considering that $\rho_n,\si_n$ have real amplitudes which implies $\Delta=0$. This shows that there are $N_1,N_2={\rm poly}(n,1/\ep,\de)$ such that Eq.~\eqref{eq:efficient} holds for $\{\rho\},\{\si\}\in{\rm CW}$.
\end{proof}

Next, we show that the class ${\rm CW}$ contains highly entangled states that cannot be efficiently approximated by matrix product states (MPS). This is summarised by the following proposition:

\begin{prop}\label{prop:high-ent}
    There exists at least one family of states $\{\rho\}\in {\rm CW}$ such that 
    \begin{equation}
        S_2(\tr_{n/2}\rho_n)\geq c \, n\,, \label{eq:high-entanglement_statement}
    \end{equation}
    for some constant $c$. Here, $\tr_{n/2}\rho_n$ is the reduced state obtained by tracing out the first half of the system's qubits and $S_2$ is the second Rényi entropy. According to well-known results (see for example Ref.~\cite{SWVC08}), this implies that $\{\rho\}$ cannot be approximated to arbitrary precision by a family of MPS with bond dimension polynomial in~$n$.
\end{prop}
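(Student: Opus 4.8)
The plan is to exhibit an explicit family $\{\rho_n\}$ in $\mathrm{CW}(n)$ whose half-chain second Rényi entropy grows linearly in $n$, by acting on the $W$-state with a real Clifford circuit that deliberately scrambles the computational-basis support across the bipartition. The natural starting point is to compute the entanglement of the bare $W$-state: for $\ket{W_n}$ cut in half, the reduced state $\tr_{n/2}\ket{W_n}\bra{W_n}$ has rank $2$ (it is supported on the subspace spanned by $\ket{0\cdots0}$ and the uniform single-excitation vector on $n/2$ qubits), so its Rényi entropies are $O(1)$. Thus the $W$-state alone is useless and the entanglement must be manufactured entirely by the Clifford $C$. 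The key observation is that conjugating by a real Clifford maps the stabilizer-like structure around, but more usefully, a CNOT-fanout or a layer of CNOTs controlled on the ``excitation'' qubits can spread a single excitation into a highly nonlocal pattern.

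Concretely, I would take $C$ to be a real Clifford that implements a classical reversible linear map $z \mapsto Mz$ on computational basis states for a suitable invertible $M \in \mathrm{GL}(n,\mathbb{F}_2)$ (such maps are realizable by CNOT circuits, which are real Clifford), so that $C\ket{W_n} = \tfrac{1}{\sqrt n}\sum_{i=1}^n \ket{M e_i}$ where $e_i$ is the $i$-th standard basis vector. Then $\tr_{n/2}(C\ket{W_n}\bra{W_n}C^\dagger) = \tfrac1n \sum_{i,j} \ket{(Me_i)_A}\bra{(Me_j)_A}\,\langle (Me_j)_B | (Me_i)_B\rangle$, where subscripts $A,B$ denote the restriction of the $n$-bit string to the first/second half of the coordinates. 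Choosing $M$ so that the vectors $(Me_i)_B \in \mathbb{F}_2^{n/2}$ are pairwise distinct for $i=1,\dots,n$ — which is easy, e.g. take the columns of $M$ to have distinct, nonzero lower halves, possible as soon as $n \le 2^{n/2}-1$ — the overlaps $\langle (Me_j)_B|(Me_i)_B\rangle$ vanish for $i\neq j$, so the reduced state becomes $\tfrac1n\sum_i \ket{(Me_i)_A}\bra{(Me_i)_A}$, a mixture of up to $n$ orthonormal (hence distinguishable) states if additionally the $(Me_i)_A$ are distinct. Then $\tr\rho_A^2 = \sum_i (1/n)^2 \cdot (\text{multiplicity})$; with all $(Me_i)_A$ distinct this is $1/n$, giving $S_2(\rho_A) = -\log(1/n) = \log n$, which already suffices for a statement of the form $S_2 \ge c\log n$ — but to get the stated \emph{linear} bound $S_2 \ge cn$ I instead need the reduced state to have exponentially many (in $n$) eigenvalues of comparable size, which the $W$-state cannot provide since its support has only $n$ basis vectors.

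This means the bare $W$-state genuinely cannot give linear entropy under \emph{any} Clifford, since Clifford unitaries preserve the number of computational-basis terms only up to the stabilizer structure — but in fact a real Clifford applied to $\ket{W_n}$ yields a state that is a uniform superposition over a coset structure and can have exponential support. The right construction: pick $C$ so that $C\ket{W_n}$ is (locally) equivalent to a state like $\tfrac{1}{\sqrt n}\sum_{i=1}^n (-1)^{\phi_i(x)}\ket{x \oplus v_i}$-type object — more carefully, use the fact that a single-qubit excitation passed through a CNOT-ladder followed by Hadamards on the first half produces, on the $A$ block, a superposition over \emph{all} $2^{n/2}$ strings with $\pm$ signs, so that $\rho_A$ is close to maximally mixed on a space of dimension $\mathrm{poly}$... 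Let me instead rely on the cleanest route: \emph{it is a known fact} that the real Clifford orbit of any fixed non-stabilizer state (and $\ket{W_n}$ for $n\ge 3$ is not a stabilizer state) contains states of near-maximal entanglement across any fixed cut, because the real Clifford group acts transitively enough on bipartitions; quantitatively, averaging $S_2(\tr_{n/2} C\ket{W_n}\bra{W_n}C^\dagger)$ over a suitable subgroup of $\mathrm{rCl}(n)$ — e.g. the CNOT subgroup acting as $\mathrm{GL}(n,\mathbb F_2)$ — and showing the average purity $\mathbb E_M \tr\rho_A^2$ is $2^{-\Omega(n)}$ forces existence of at least one $M$ with $S_2 \ge cn$.

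The main obstacle is precisely this last point: making $\mathbb{E}_M\,\tr\rho_A^2$ exponentially small. With $C$ ranging over CNOT circuits, $\tr\rho_A^2 = \tfrac{1}{n^2}\sum_{i,j,k,l} [\,(Me_i)_B = (Me_j)_B\,]\,[\,(Me_k)_B=(Me_l)_B\,]\,[\,(Me_i)_A\oplus(Me_j)_A = (Me_k)_A \oplus (Me_l)_A\,]$, and one must estimate $\mathbb E_M$ of this over uniform invertible $M$ using the pairwise/4-wise near-independence of the columns $Me_i$ (which are \emph{not} independent, but are so up to lower-order corrections since a uniform invertible matrix has near-independent columns). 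I expect this counting — organizing the quadruples $(i,j,k,l)$ by their coincidence pattern and bounding the $\mathbb F_2$-linear-algebra constraints — to be the technical heart of the argument; the ``diagonal'' terms $i=j,k=l$ contribute $\Theta(1/n)$ and must be shown to dominate, with all off-diagonal collision patterns (which require $(Me_i)_B=(Me_j)_B$ for $i\neq j$, an event of probability $\approx 2^{-n/2}$ per pair) contributing a vanishing remainder. Once $\mathbb E_M\tr\rho_A^2 \le C/n$ (or better, $\le 2^{-\Omega(n)}$ after also Hadamard-ing the $A$ block to spread its support) is established, Markov's inequality yields an $M$ with $\tr\rho_A^2 \le C'/n$ hence $S_2(\rho_A)\ge \log(n/C')$, and with the Hadamard-augmented construction $\tr \rho_A^2 \le 2^{-cn}$ giving $S_2 \ge cn$ as claimed; the final invocation of Ref.~\cite{SWVC08} to rule out polynomial-bond-dimension MPS approximation is then immediate.
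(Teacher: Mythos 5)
There is a genuine gap: neither of the two concrete routes you sketch can deliver the \emph{linear} lower bound, and the fix you propose for this cannot work. Restricting $C$ to CNOT circuits (linear maps $z\mapsto Mz$) produces $C\ket{W_n}=\tfrac{1}{\sqrt n}\sum_i\ket{Me_i}$, a superposition of $n$ product states, so the Schmidt rank across any cut is at most $n$ and $S_2(\rho_A)\le\log n$ for \emph{every} choice of $M$ --- no amount of collision counting over $\mathrm{GL}(n,\mathbb F_2)$ can beat this. You correctly notice this ceiling, but your remedy of ``Hadamard-ing the $A$ block to spread its support'' is a unitary acting only on the $A$ subsystem; it conjugates $\rho_A$ and therefore leaves its spectrum, hence $S_2(\rho_A)$, exactly unchanged. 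The remaining route, the appeal to the ``known fact'' that the real Clifford orbit of a non-stabilizer state contains near-maximally entangled states, is essentially the proposition itself invoked without proof. To get $S_2=\Omega(n)$ you need $C$ to contain entangling gates \emph{across} the cut interleaved with basis-changing gates, so that the individual branches $C\ket{e_i}$ each acquire Schmidt rank $2^{\Omega(n)}$; your ansatz never reaches this regime.

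The paper closes exactly this gap by averaging over the \emph{full} real Clifford group rather than the CNOT subgroup: using Jensen's inequality, $\mathbb E[-\log\tr(\mathrm{SWAP}_{n/2}\,\rho^{\otimes2})]\ge-\log\mathbb E[\tr(\mathrm{SWAP}_{n/2}\,\rho^{\otimes2})]$, and then evaluating the two-fold twirl $\mathbb E[C^{\otimes2}(\ket{W_n}\bra{W_n})^{\otimes2}(C^\dagger)^{\otimes2}]$ exactly via the orthogonal $2$-design property of $\mathrm{rCl}(n)$: the twirl projects onto the commutant of $O(2^n)^{\otimes2}$, spanned by $P_{\rm sym}$, $P_{\rm asym}$ and $\ket{\Phi_{\bf 0}}\bra{\Phi_{\bf 0}}$, and the overlaps $\tr(\mathrm{SWAP}_{n/2}P_{\rm sym})=2^{3n/2}$, $\tr(\mathrm{SWAP}_{n/2}\ket{\Phi_{\bf 0}}\bra{\Phi_{\bf 0}})=1$ give an average purity of $O(2^{-n/2})$, hence average $S_2\ge cn$ and existence of one such family. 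Your instinct to bound the average purity and then extract a single good $C$ is the right skeleton; what is missing is the design/commutant computation (or an explicit entangling circuit) that actually drives the average purity exponentially small.
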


\begin{proof}
    To prove this statement, we first show that the average entanglement entropy of the states in the class ${\rm CW}(n)$ grows linearly in $n$, in particular
    \begin{equation}
        \mathbb{E}[S_2(\tr_{n/2}C\ket{W_n}\bra{W_n}C^\dag)]\geq c \,n \,. \label{eq:average-entanglement}
    \end{equation}
    Here, $\mathbb{E}$ denotes the average over the unitary $C$ according to the Haar measure of the real Clifford group ${\rm rCl}(n)$.

    To see this, notice that using standard techniques we can rewrite the left hand side of the expression above as
    \begin{align}
        &\mathbb{E}\left[ -\log\,\tr\left(\mathrm{SWAP}_{n/2} \, \left( C\ket{W_n}\bra{W_n}C^{\dag}\right)^{\otimes 2} \right) \right]  \nonumber\\
        &\geq-\log\tr \Big[ \mathrm{SWAP}_{n/2} \; \mathbb{E} \!\left( C^{\otimes 2}(\ket{W_n}\bra{W_n})^{\otimes 2} (C^\dag)^{\otimes 2}\right)\Big]  \nonumber\\
        &\geq -\log\tr \Big[ \mathrm{SWAP}_{n/2} \; \frac{1}{2^{2n}}(k \, P_{\rm sym}+ 2^nk'\ket{\Phi_{\bf 0}}\!\bra{\Phi_{\bf 0}}) \Big] , \label{eq:Haar-integral-result}
    \end{align}
    where $\mathrm{SWAP}_{n/2}$ is the operator that swaps the first $n/2$ qubits of the first factor of the tensor product with the  first $n/2$ qubits of the second factor. In the last step we have used Haar integration techniques to evaluate the average value of the quantity involving $C^{\otimes 2}$, using that the real Clifford group is an orthogonal $2$-design~\cite{HFGW18}. 
    More details on this calculation are presented in Appendix~\ref{app:Haar} (see Theorem~\ref{thm:comm}, Lemma~\ref{lemma:comm} and following discussion). In particular, we have that $P_{\rm sym}=(\id+\mathrm{SWAP}_{n})/2$ and $k$, $k'$ are constants. Here, $\mathrm{SWAP}_{n}$ swaps the entire first factor of the tensor product with the second. By straightforward computation we find $\tr(\mathrm{SWAP}_{n/2}P_{\rm sym})=2^{3n/2}$ and $\tr(\mathrm{SWAP}_{n/2}\ket{\Phi_{\bf 0}}\bra{\Phi_{\bf 0}})=1$, which implies that~\eqref{eq:Haar-integral-result} is equal to $-\log(k\, 2^{-n/2}+k'\, 2^{-n})\geq c \, n$.
    
    Since, according to~\eqref{eq:average-entanglement}, the states $\mathrm{CW}(n)$ on average have an entanglement entropy of at least $c\,n$, there must exist at least one state in $\mathrm{CW}(n)$ whose entanglement entropy is larger or equal to this average value. By setting $\rho_n$ equal to such state for all $n$, we construct a family $\{\rho\}$ that satisfies~\eqref{eq:high-entanglement_statement}.
\end{proof}

\begin{figure}[t]
\includegraphics[width=.99\linewidth]{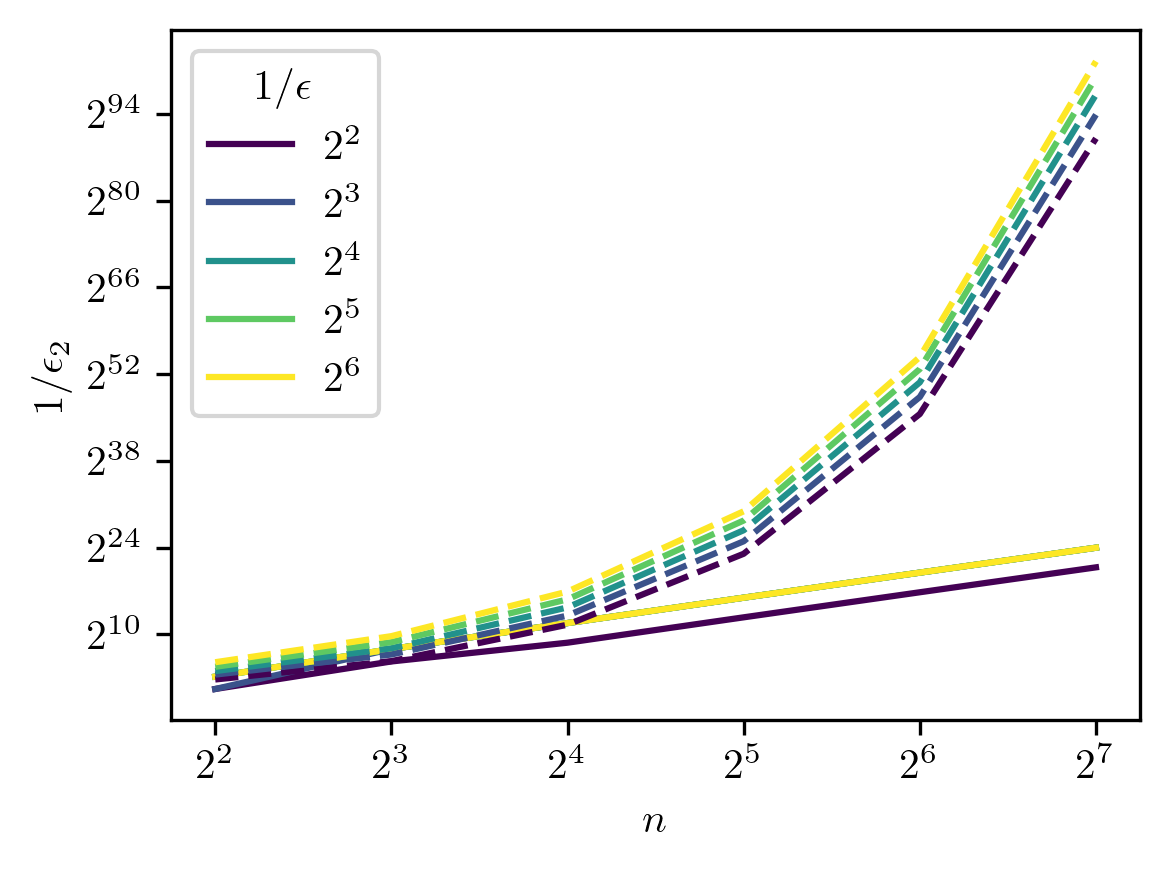}
\caption{\raggedright \small Here we plot $1/\ep_2$ in terms of $n$ for different values of $1/\ep$ and for two families of states where the protocol is efficient (solid) and inefficient (dashed). To illustrate the measurement protocol we consider states that have an efficient MPS representation and thus can be simulated (here we use TenPY~\cite{TenPy}), however the same behaviour of the solid lines is also expected for other families of states with extensive entanglement entropy (see Definition~\ref{def:CW}, the remark afterwards and Proposition~\ref{prop:high-ent}). Precisely, we compute $\ep_2$ as the solution of $F_{\rho,N,K}(\ep_2)=\epsilon$ (see discussion after Eq.~\eqref{eq:FNK}). The solid lines correspond to the Dicke state with two excitations, whereas the dashed ones correspond to a translationally invariant random MPS of bond dimension $2$ where the entries of each tensor are iid according to the standard normal distribution. We take $N=5\times10^4$ and $K$ effectively infinite (as for MPS $\langle P\rangle^2_\rho$ can be computed exactly).}\label{fig:scaling_eps2_eps_n}
\end{figure}

A final property of the states in ${\rm CW}(n)$ is that they cannot be approximated by quantum circuits with just few non-Clifford gates. Precisely, states prepared by Clifford circuits supplemented by at most $t$ T-gates (known in the literature as \emph{$t$-doped states}~\cite{CLL24,Leone2024learningtdoped}) can only be good approximations to states in ${\rm CW}(n)$ if $t$ grows linearly in~$n$. This is summarized in the following proposition~\footnote{The proof is analogous for arbitrary non-Clifford gates acting on a constant number of qubits}:

\begin{prop}\label{prop:high-t}
    Let $\rho'$ be a $t$-doped state on $n$ qubits. If $\|\rho-\rho'\|_{\rm tr}<1/4$ for some $\rho\in{\rm CW}(n)$, then $t>n/4$. 
\end{prop}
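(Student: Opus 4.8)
The plan is to argue by contradiction using a resource-counting invariant that separates $t$-doped states from the states in ${\rm CW}(n)$. The natural invariant is the \emph{stabilizer nullity} (or equivalently the size of the support of the Pauli/Bell distribution, or the stabilizer R\'enyi entropy): a $t$-doped pure state $\ket{\psi'}$ has at least $n-t$ independent Pauli operators in its stabilizer group (those left untouched by the $\le t$ non-Clifford gates), so its characteristic Pauli distribution $p_{\rho'}$ is supported on a subgroup-coset structure of size at most $2^{n+t}$ — in particular, $p_{\rho'}$ has at least $2^{n-t}$ Pauli operators with $\langle P_\ba\rangle^2=1$, i.e.\ it is ``flat and large'' on a stabilizer group of dimension $\ge n-t$. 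The states in ${\rm CW}(n)$, by contrast, have a Pauli spectrum inherited from $\ket{W_n}$ by a Clifford conjugation (which permutes the $P_\ba$'s without changing the multiset of values $\langle P_\ba\rangle^2$), and the $W$-state has \emph{no} large stabilizer group: one computes directly that $\langle W_n| P_\ba | W_n\rangle^2 \in \{0, 1/n^2, 4/n^2, \dots\}$ with only $O(1)$ Paulis (namely $Z^{\otimes n}$-type and identity combinations) attaining a value bounded away from $0$, and certainly no nontrivial $2^{\Omega(n)}$-dimensional stabilizer group. So the two classes are far apart in this invariant whenever $t$ is small.

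Concretely, I would proceed as follows. First, fix $\rho\in{\rm CW}(n)$ and a $t$-doped $\rho'$ with $\|\rho-\rho'\|_{\rm tr}<1/4$. Second, recall (this is exactly the CDF input already invoked in the proof of Proposition~\ref{prop:efficiency}, Eq.~\eqref{eq:CDF-W}) that $F_\rho(\ep)=0$ for $\ep < 4/n^2$, i.e.\ every Pauli with nonzero weight under $p_\rho$ satisfies $\langle P_\ba\rangle_\rho^2 \ge 4/n^2$; since $p_\rho$ is a probability distribution, this forces $|\mathrm{supp}(p_\rho)| \le 2^n \tr\rho^2 \cdot (n^2/4) = {\rm poly}(n)$. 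Third, show that a $t$-doped state with $t<n/4$ must have $|\mathrm{supp}(p_{\rho'})|$ at least $2^{n-t} > 2^{3n/4}$ (its stabilizer group of size $\ge 2^{n-t}$ all sits at value $1$, hence in the support). Fourth, convert the trace-distance bound into a bound on the distance between the Pauli distributions — e.g.\ via $|\langle P_\ba\rangle_\rho - \langle P_\ba\rangle_{\rho'}| \le 2\|\rho-\rho'\|_{\rm tr}$ and a union/counting argument, or more cleanly via the known fact that the Bell-sampling distributions satisfy $\mathrm{TV}(q_\rho,q_{\rho'}) \le f(\|\rho-\rho'\|_{\rm tr})$ for an explicit $f$ with $f(1/4)<1$ — to derive that the supports cannot differ in size by an exponential factor, a contradiction.

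The cleanest route for the fourth step is probably to avoid distributional TV altogether and work with fidelities: if $\|\rho-\rho'\|_{\rm tr}<1/4$ then $\langle\psi|\rho'|\psi\rangle > 1 - 1/4 = 3/4$ for the pure state $\ket\psi$ with $\rho=\ket\psi\!\bra\psi$, and one then shows that a state with overlap $>3/4$ with a $t$-doped state must itself be ``$O(t)$-close to $t'$-doped'' in a sense strong enough to force a $2^{\Omega(n)}$ stabilizer dimension up to small error — but that state is $\ket{W_n}$ (Clifford-conjugated), which provably lacks such structure. I expect the \textbf{main obstacle} to be exactly this robustness step: the stabilizer-nullity/R\'enyi-entropy lower bound for $t$-doped states is an \emph{exact} statement, and I need its approximate version (states merely \emph{close} to $t$-doped still have many near-stabilizers), with constants good enough that the threshold comes out as the clean $t>n/4$ claimed. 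This is the kind of argument that appears in the $t$-doped learning literature (e.g.\ \cite{CLL24,Leone2024learningtdoped}) — robust stabilizer-dimension bounds via the stabilizer R\'enyi entropy or via anticoncentration of the Pauli spectrum — so I would lean on those techniques, tracking constants, rather than reinventing them. Everything else (the $W$-state Pauli spectrum, the $\mathrm{supp}(p_\rho)={\rm poly}(n)$ bound, the Clifford-invariance of the Pauli value multiset) is a short direct computation.
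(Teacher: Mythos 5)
Your high-level idea --- pit the $\ge 2^{n-t}$ exact stabilizers of a $t$-doped state against the Pauli spectrum of $\ket{W_n}$ (which is Clifford-invariant as a multiset) and extract a contradiction from the closeness assumption --- is the same as the paper's. But the quantitative core of your plan is wrong, and the step you flag as the ``main obstacle'' is exactly where your route breaks. The claim $|\mathrm{supp}(p_\rho)|\le 2^n\tr\rho^2\cdot(n^2/4)={\rm poly}(n)$ is an arithmetic slip: that bound equals $2^n n^2/4$, which is exponential, and it is essentially tight, since every string of $I$'s and $Z$'s with $z_\ba\neq n/2$ has $\langle P_\ba\rangle_{W_n}=1-2z_\ba/n\neq0$, so $|\mathrm{supp}(p_\rho)|=\Theta(2^n)$, i.e.\ the $W$-state's Pauli support is \emph{larger} than the $2^{n-t}$ stabilizers of $\rho'$. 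Likewise ``only $O(1)$ Paulis attaining a value bounded away from $0$'' is false: for any constant threshold $c<1$ the count is $2^{\Theta(n)}$. There is therefore no exponential discrepancy in support sizes to exploit, and step four of your plan has nothing to contradict; the fidelity/robust-stabilizer-nullity detour you propose instead is both harder and unnecessary.

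The missing idea is to count not the support but the set of Paulis with expectation \emph{near} $\pm1$, at a threshold matched to the trace-distance budget. Since $|1-2z/n|>3/4$ forces $z<n/8$ or $z>7n/8$, one gets $|\{\ba:|\langle P_\ba\rangle_\rho|>3/4\}|\le 2\sum_{k\le n/8}\binom{n}{k}\le 2^{nH(1/8)+1}<2^{3n/4}$, while for $t\le n/4$ the $t$-doped state has at least $2^{n-t}\ge 2^{3n/4}$ Paulis with $|\langle P_\ba\rangle_{\rho'}|=1$. Pigeonhole then produces a single Pauli $P$ with $|\langle P\rangle_{\rho'}|=1$ and $|\langle P\rangle_\rho|\le 3/4$, and the one-line bound $|\tr(P(\rho-\rho'))|\le\|\rho-\rho'\|_{\rm tr}$ gives $\|\rho-\rho'\|_{\rm tr}\ge1/4$ directly. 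You brush past this ingredient (the pointwise bound on $\langle P\rangle_\rho-\langle P\rangle_{\rho'}$) but aim it at the wrong target; the single-Pauli witness replaces your entire step four, and no approximate stabilizer-dimension result is needed.
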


\begin{proof}
     We show the contrapositive: $t$-doped states with $t\leq n/4$ are at least $1/4$ far in trace distance from ${\rm CW}(n)$. A simple counting argument suffices: on the one hand, arbitrary $t$-doped states $\rho'$ on $n$ qubits have at least $2^{n-t}$ stabilizers, as the application of a single $T$-gate reduces the number of stabilizer generators by at most one~\cite{CLL24}. On the other hand, we can upper-bound the number of Paulis with high expectation value for states $\rho$ in ${\rm CW}(n)$, as we discuss in Appendix~\ref{app:CDFs}. Concretely, for $t \leq n/4$, we find that
     \[
     \begin{aligned}
         |\{\ba:|\langle P_\ba\rangle_\rho|>3/4\}|&<2^{3n/4}\\
         |\{\ba:|\langle P_\ba\rangle_{\rho'}|=1\}|&\ge2^{n-t}\ge 2^{3n/4},
     \end{aligned}
     \]
     where we have used bound~\eqref{eq:counting-Ps} in Appendix~\ref{app:CDFs} to obtain the first inequality. Therefore, there exists at least one Pauli operator $P$ with $|\langle P\rangle_\rho-\langle P\rangle_{\rho'}|\ge1/4$, and thus $\|\rho-\rho'\|_{\rm tr}\ge1/4$, as claimed.
\end{proof}

We have constructed a (generalizable) class of states denoted ${\rm CW}$, see Definition~\ref{def:CW}. Proposition~$\ref{prop:high-ent}$ shows that this class contains sequences of highly-entangled states, for which MPS learning algorithms are known to fail~\cite{SWVC08}. In fact, as a consequence of Markov's inequality, a large fraction of sequences will have this property. Furthermore, due to Proposition~$\ref{prop:high-t}$, the states in ${\rm CW}$ cannot be well-approximated by $(\log n)$-doped states, for which efficient learning algorithms also exist~\cite{CLL24,Leone2024learningtdoped}. Nonetheless, Algorithm~\ref{alg:rDIPE} is efficient on states in ${\rm CW}$, as specified by Proposition~\ref{prop:efficiency}.

\section{Robustness} 
In any realistic experimental setting, state preparation is subject to noise. Even if the protocol ${\rm rDIPE}$ is efficient on inputs $\{\rho\}, \{\si \}$, it is a priori unclear how the protocol will behave on noisy inputs $\{\rho'\}, \{\si'\}$. In particular, we want that small state preparation errors still allow for efficiently estimating $\tr(\rho'\si')$, rendering the protocol practically meaningful. This is complicated by the fact that general noise channels will not preserve the realness of our target states, even if these are ideally prepared by real gates only~\cite{aharonov2003simple, shi2002toffoli}. Note here that an arbitrary noise channel can always be mapped to a realness-preserving Pauli channel by randomized compilation~\cite{WE16}, even though our proposal does not rely on this technique. However, independently of this, in Appendix~\ref{app:robustness} we show that ${\rm rDIPE}$ is robust on inputs $\{\rho\},\{\si\}\in{\rm CW}$, in the following precise sense.

Given two sequences $\{\rho\},\{\si\}\in{\rm CW}$, there exists $k>0$ and integers $N_1,N_2={\rm poly}(n,1/\tau,\de)$
such that for all $n,\tau,\de>0$
\begin{equation}\label{eq:efficient-robust}
    {\mathbb P}\Big\{\left|f(\rho'_n,\si'_n,N_1,N_2)-{\rm c}(\rho'_n,\si'_n)\right|>k\tau\Big\}<\e^{-\de}\,,
\end{equation}
for all other sequences $\{\rho'\},\{\si'\}$ such that $\sup_n\{\|\rho_n-\rho'_n\|_{\rm tr}\} \leq\tau$ and analogously for $\si$.
That is, even if we run ${\rm rDIPE}$ on not necessarily real inputs $\{\rho'\}, \{\si'\}$ that are within trace-norm $\tau$ of the real pure states $\{\rho\}, \{\si\}\in {\rm CW}$, where the protocol is efficient, we are still able to estimate the quantity ${\rm c}(\rho',\si')$ of Eq.~\eqref{eq:cos}  with polynomial resources, but up to an error linear in $\tau$.

To round off this section on the robustness of our protocol, let us note that the practical impossibility of preparing two exact copies $\rho\otimes\rho$ or $\sigma\otimes\sigma$ for Bell sampling can be treated in an analogous way. Indeed, any possibly entangled state $\tilde\rho$ with $\|\tilde\rho-\rho\otimes\rho\|<\tau$ allows for sampling from $q_\rho$, up to an error at most $\tau$ in TV-distance.

\section{Estimation of required resources} 
A further appealing aspect of the presented protocol is that, given arbitrary real states (not necessarily known to be efficient), the remote parties can experimentally determine the number of samples required for distributed inner product estimation as follows: First, collect $N$ samples $\{\ba_1,\ldots,\ba_{N}\}$ from the Pauli distribution of the $n$-qubit pure state $\rho$. This can be achieved by measuring in the Bell basis, as for real states $q_\rho(\ba)=p_\rho(\ba)$. Then for each sample construct an estimate of $\langle P_{\ba_i}\rangle_\rho^2$ using $K$ further measurements. Let us call this estimate $\al_i^2(K)$. Consider now the previously introduced CDF~\eqref{eq:CDF} and its empirical estimator
\begin{equation}
    F_{\rho,N,K}(x)=\frac{1}{N}\sum_{i=1}^N\th\big(x-\al_i^2(K)\big)\,. \label{eq:FNK}
\end{equation}
Using the Dvoretzky–Kiefer–Wolfowitz~\cite{dvoretzky1956asymptotic,DKW} and Hoeffding inequalities, one can show that there are $N={\rm poly}(\de,1/\ep)$ and $K={\rm poly}(\de,1/x,\log(1/\ep))$ such that, if $F_{\rho,N,K}(2x)\le\ep/2$ then
\[
{\mathbb P}\big(F_\rho(x)>\ep\big)<\e^{-\de}\,.
\]
This means that, by looking at $F_{\rho,N,K}$ with this choice of $N$, $K$, we can find a sufficiently small $\ep_2$ such that with high probability $2F_\rho(\ep_2)+4\sqrt{\ep_2}\le\ep/2$. Consider now $\ep_2$ as a function of $n,1/\ep$. Then, similarly to Eq.~\eqref{eq:sets-relation_efficiency}, it follows that if $1/\ep_2={\rm poly}(n,1/\ep)$ the protocol is efficient. One can see that this is equivalent to the solution of $F_{\rho,N,K}(x)=\epsilon$ being $x=1/{\rm poly}(n,1/\ep)$. We illustrate this approach in Fig.~\ref{fig:scaling_eps2_eps_n} for two different families of states.

\section{Conclusions} In our work, we construct explicit examples of states without compact classical description (in the form of MPS or $t$-doped states) for which the distributed estimation of inner products is efficient via Pauli sampling. We would like to finish by commenting on the practical utility of our work and its limitations.

For our proposal to be implemented, each party needs to be able to approximately prepare a two-copy pure (and real) quantum state, a theoretical primitive that has been shown to be of interest in different contexts (see, e.g.~\cite{Science-two-copies-theory} and references therein). We stress that each such two-copy state has to be prepared within the same lab, something that has actually been demonstrated in practice in different settings like photonic systems~\cite{two-copies-photons}, trapped-ions~\cite{two-copies-trapped-ions} or atomic arrays~\cite{two-copies-ultra-cold-bosonic-atoms,two-copies-lukin}, and even for logical quantum states~\cite{two-logical-copies-lukin}. Nonetheless, it is admittedly true that being able to prepare such a high-purity, two-copy state is experimentally challenging and already indicates a high quality of the quantum devices involved. 
However, we think that our proposal can still be practically useful in scenarios where the parties use different quantum implementations, or they cannot natively implement the same gate-sets. For instance, our proposal can be used to efficiently check whether two sequences of different apparatus do indeed produce similar wave-functions (even if an efficient description of the underlying wave-function is not available).

Our results further show that the amount of classical information that needs to be exchanged to estimate the inner product between certain quantum states via LOCC can be smaller than the amount of information needed to characterise the quantum states themselves. While our work primarily serves as a proof of concept, we hope that further investigation may generalise these findings to a broader class of states, potentially unlocking new applications in quantum cryptography and verification.

\section*{Acknowledgments} 
We would like to thank R. Ruiz for pointing out Ref.~\cite{RSPL24} and useful discussions. TG was supported by the German Federal Ministry for Education and Research (BMBF) under the project FermiQP. JC acknowledges financial support from BerlinQuantum, BMBF (DAQC, MUNIQC-Atoms), and the project Munich Quantum Valley (K8). JD acknowledges financial support from BMBF (DAQC). SV acknowledges support from PID2021-127726NB- I00 (MCIU/AEI/FEDER, UE), from the Grant IFT Centro de Excelencia Severo Ochoa CEX2020-001007-S, funded by MCIN/AEI/10.13039/501100011033, and from the CSIC Research Platform on Quantum Technologies PTI-001.

\clearpage

\appendix
\section{The {\rm rDIPE} algorithm}\label{app:rDIPE} 
The protocol for real distributed inner product estimation 
(rDIPE) which we analyze in this work is a particular instance of the so-called symmetric protocol of Ref.~\cite{HIJLEC24}, adapted to mixed states with $\tr\rho^2,\tr\sigma^2>1/2$. It can be summarized as follows: 

\begin{algorithm}[H]
    \caption{rDIPE}\label{alg:rDIPE}
    \raggedright\textbf{Input:} $N_1$ copies of the unknown states $\rho\otimes\rho$ and $\si\otimes\si$, and $N_1N_2$ copies of the states $\rho$ and $\sigma$.\\
    \textbf{Output:} $f(\rho,\si,N_1,N_2)$, an estimate of $c(\rho,\si)$
    \begin{algorithmic}[1]
    \State  Alice and Bob sample $N_1$ times from the distribution $q_{\rm mix}=(q_\rho+q_\si)/2$ to obtain $\{\ba_1,\ldots,\ba_{N_1}\}$. That is, each $\ba_i$ is the outcome of a Bell measurement in either $\rho\otimes\rho$ or $\si\otimes\si$ with equal probability.
    \ForAll{$i\in\{1,2,\ldots,N_1\}$}
        \State {\bf Alice}: using $N_2$ single-copy measurements, construct $\al_i(N_2)$, an estimate of the expectation value $\langle P_{\ba_i}\rangle_\rho$.
        \State {\bf Bob}: using $N_2$ single-copy measurements, construct $\be_i(N_2)$, an estimate of the expectation value $\langle P_{\ba_i}\rangle_\si$.
    \EndFor
    \State \textbf{Return}
    \begin{multline*}
    f(\rho,\si,N_1,N_2)=\\
    \frac{1}{N_1}\sum_{i=1}^{N_1}\frac{2\,[\al_i(N_2)]\,[\be_i(N_2)]}{[\al_i(N_2)]^2\sqrt{B/A}+[\be_i(N_2)]^2\sqrt{A/B}}\,,
    \end{multline*}
    where $A$ and $B$ are estimates of $\tr\rho^2$ and $\tr\si^2$.
    \end{algorithmic}
\end{algorithm}

In this work, to simplify derivations, we assume that $A$ and $B$ are exactly equal to the purities $\tr\rho^2$, $\tr\si^2$. In practice, one would need to estimate them using $N_3$ additional measurement shots. This will lead to a further error term on the final inner product estimation, which can however be bounded with similar techniques to the other terms and can be made arbitrarily small by choosing a suitable $N_3$.

\section{On Haar integration over the real Clifford subgroup}\label{app:Haar}
Here we provide the details omitted in the proof of Proposition~\ref{prop:high-ent}. To evaluate twirls with respect to the real Clifford subgroup, we make use of the following well-known result (see Ref.~\cite{mele2024introduction} and references therein for a recent review on related techniques):

\begin{thm}[Commutant theorem]\label{thm:comm}
    Let $G$ be a compact group with Haar measure $\mu_G$ and $\{U(g)\}_{g \in G}$ a (finite-dimensional) unitary representation of $G$ on $(\mathcal{H}, \langle\cdot,\cdot\rangle)$. The $k$-fold twirl 
    \begin{equation*}
    \Phi_G^k: X \mapsto \int_G U(g)^{\otimes k} X \, U(g)^{\dag \otimes k} \, d\mu_G(g)
    \end{equation*}
    is an orthogonal projector (with respect to the Hilbert-Schmidt inner product) onto the commutant of $\{U(g)^{\otimes k}\}_{g\in G}$.
\end{thm}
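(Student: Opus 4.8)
The plan is to verify the three properties that together characterize an orthogonal projector onto the commutant $\mathcal{C}:=\{X:[X,U(g)^{\otimes k}]=0\ \text{for all}\ g\in G\}$: (i) the range of $\Phi_G^k$ lies in $\mathcal{C}$; (ii) $\Phi_G^k$ acts as the identity on $\mathcal{C}$; and (iii) $\Phi_G^k$ is self-adjoint with respect to the Hilbert–Schmidt inner product $\langle A,B\rangle_{\rm HS}=\tr(A^\dagger B)$. Properties (i) and (ii) already force $\Phi_G^k$ to be an idempotent linear map with range exactly $\mathcal{C}$, and (iii) upgrades this to an \emph{orthogonal} projection. Throughout one uses that $U$ is unitary, so $U(g^{-1})=U(g)^{-1}=U(g)^\dagger$, together with the standard facts that the Haar measure $\mu_G$ on the compact group $G$ is normalized, left-invariant, and invariant under inversion $g\mapsto g^{-1}$. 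Finite-dimensionality of $\mathcal{H}$ ensures the operator-valued integral is well defined (e.g.\ entrywise) and that trace, adjoint and composition may be exchanged with it.

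For (i), I would fix $h\in G$ and apply left-invariance of $\mu_G$ (the substitution $g\mapsto h^{-1}g$):
\begin{equation*}
U(h)^{\otimes k}\,\Phi_G^k(X)\,U(h)^{\dagger\otimes k}=\int_G U(hg)^{\otimes k}\,X\,U(hg)^{\dagger\otimes k}\,d\mu_G(g)=\Phi_G^k(X),
\end{equation*}
and since $h$ is arbitrary this gives $\Phi_G^k(X)\in\mathcal{C}$. For (ii), if $X\in\mathcal{C}$ then $U(g)^{\otimes k}XU(g)^{\dagger\otimes k}=X$ for every $g$, so the integrand is constant and $\Phi_G^k(X)=X\int_G d\mu_G(g)=X$ by normalization; in particular $\Phi_G^k\circ\Phi_G^k=\Phi_G^k$.

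For (iii), using $(U(g)^{\dagger\otimes k})^\dagger=U(g)^{\otimes k}$ and cyclicity of the trace,
\begin{equation*}
\langle\Phi_G^k(X),Y\rangle_{\rm HS}=\int_G\tr\!\big(X^\dagger\,U(g)^{\dagger\otimes k}\,Y\,U(g)^{\otimes k}\big)\,d\mu_G(g);
\end{equation*}
then substituting $g\mapsto g^{-1}$, invoking inversion-invariance of $\mu_G$ and $U(g^{-1})=U(g)^\dagger$, rewrites the right-hand side as $\int_G\tr(X^\dagger\,U(g)^{\otimes k}\,Y\,U(g)^{\dagger\otimes k})\,d\mu_G(g)=\langle X,\Phi_G^k(Y)\rangle_{\rm HS}$. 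A self-adjoint idempotent linear operator is precisely an orthogonal projector, and by (i)–(ii) its range is $\mathcal{C}$, which is the claim. The argument is entirely routine; the only point needing care — and the closest thing to an obstacle — is the bookkeeping of the invariance properties of the Haar measure in step (iii) and the justification for interchanging integration with trace/adjoint/composition, which is immediate in the finite-dimensional setting assumed here.
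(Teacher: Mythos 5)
Your proof is correct and complete: left-invariance of the Haar measure gives that the range lies in the commutant, fixed points give idempotence and that the commutant is contained in the range, and inversion-invariance plus cyclicity of the trace gives Hilbert--Schmidt self-adjointness, which together characterize the orthogonal projector onto the commutant. The paper itself offers no proof of this theorem --- it is stated as a well-known fact with a citation to a review --- and your argument is exactly the standard one that such references give, so there is nothing to reconcile.
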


As the real Clifford group rCl is an orthogonal 2-design~\cite{HFGW18}, the expectation value appearing in Eqs.~\eqref{eq:average-entanglement} and~\eqref{eq:Haar-integral-result} in the proof of Proposition~\ref{prop:high-ent} coincides with a $2$-fold twirl over the orthogonal group $G=O(2^n)$. To use the previous Theorem, we need the following characterization of the commutant of $O(2^n)\otimes O(2^n)$ (see Ref.~\cite{CS06} for details):

\begin{lemma} \label{lemma:comm}
    An orthonormal basis of the commutant of $O(2^n)\otimes O(2^n)$, with respect to the Hilbert-Schmidt inner product, is given by 
    \begin{align*}
        &\frac{1}{d_{\rm sym}^{1/2}}P_{\rm sym}  = \frac{1}{2\,d_{\rm sym}^{1/2}}(\id+{\rm SWAP}_n) \\[0.8em]
         &\frac{1}{d_{\rm asym}^{1/2}}P_{\rm asym} = \frac{1}{2\, d_{\rm asym}^{1/2}}(\id-{\rm SWAP}_n)\\[0.8em]
         & \overline{B} = \frac{1}{(1-1/d_{\rm sym})^{1/2}}\left(\ket{\Phi_{\bf 0}}\bra{\Phi_{\bf 0}}-\frac{1}{d_{\rm sym}}P_{\rm sym} \right),
    \end{align*}
    where $P_{\rm sym}$ and $P_{\rm asym}$ are projectors onto the symmetric and anti-symmetric subspaces with dimensions $d_{\rm sym}$ and $d_{\rm asym}$, respectively.
\end{lemma}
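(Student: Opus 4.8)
The plan is to first determine the commutant as an abstract subspace and then Gram--Schmidt orthonormalise the three natural operators that span it. Since the real Clifford elements---and more generally the $g\in O(2^n)$---are real orthogonal matrices, the relevant representation on $\CC^{2^n}\otimes\CC^{2^n}$ is the diagonal one, $g\mapsto g\otimes g$. I would decompose this space into the antisymmetric subspace, the traceless-symmetric subspace, and the one-dimensional ``trace line'' spanned by $\ket{\Phi_{\bf 0}}$; the last of these is fixed because $(g\otimes g)\ket{\Phi_{\bf 0}}=\ket{\Phi_{\bf 0}}$ whenever $gg^{T}=\id$. These three summands are irreducible and pairwise non-isomorphic as $O(2^n)$-modules, so by Schur's lemma the commutant is exactly the three-dimensional span of the orthogonal projectors onto them, equivalently $\myspan\{\id,\ \mathrm{SWAP}_n,\ \ket{\Phi_{\bf 0}}\bra{\Phi_{\bf 0}}\}$ once we write $P_{\rm sym}=\tfrac12(\id+\mathrm{SWAP}_n)$ and $P_{\rm asym}=\tfrac12(\id-\mathrm{SWAP}_n)$. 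This decomposition and dimension count are classical orthogonal-group representation theory (Ref.~\cite{CS06}), and coincide with the $k=2$ case of Brauer--Schur--Weyl duality, the three operators being the images of the three Brauer diagrams (identity, flip, and the contraction $2^n\ket{\Phi_{\bf 0}}\bra{\Phi_{\bf 0}}$); alternatively one may just check directly that these operators commute with every $g\otimes g$ and are linearly independent (immediate, e.g.\ by evaluating them on $\ket{00}$ and $\ket{01}$), and then quote the known dimension.

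It then remains to orthonormalise, which is a routine Hilbert--Schmidt computation. Since $P_{\rm sym}$ and $P_{\rm asym}$ are orthogonal projectors onto orthogonal subspaces, $\tr(P_{\rm sym}P_{\rm asym})=0$, $\tr(P_{\rm sym}^{2})=\tr P_{\rm sym}=d_{\rm sym}$ and $\tr(P_{\rm asym}^{2})=d_{\rm asym}$, so $d_{\rm sym}^{-1/2}P_{\rm sym}$ and $d_{\rm asym}^{-1/2}P_{\rm asym}$ are already two orthonormal basis elements. For the third I would run Gram--Schmidt on $\ket{\Phi_{\bf 0}}\bra{\Phi_{\bf 0}}$: it is orthogonal to $P_{\rm asym}$ because $\mathrm{SWAP}_n\ket{\Phi_{\bf 0}}=\ket{\Phi_{\bf 0}}$ gives $P_{\rm asym}\ket{\Phi_{\bf 0}}\bra{\Phi_{\bf 0}}=0$, and its overlap with the normalised $P_{\rm sym}$ is $d_{\rm sym}^{-1/2}\,\bra{\Phi_{\bf 0}}P_{\rm sym}\ket{\Phi_{\bf 0}}=d_{\rm sym}^{-1/2}$. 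Subtracting this projection leaves $\ket{\Phi_{\bf 0}}\bra{\Phi_{\bf 0}}-\tfrac{1}{d_{\rm sym}}P_{\rm sym}$, with squared Hilbert--Schmidt norm $1-\tfrac{2}{d_{\rm sym}}+\tfrac{1}{d_{\rm sym}}=1-\tfrac1{d_{\rm sym}}$; dividing by $(1-1/d_{\rm sym})^{1/2}$ gives precisely $\overline B$. Since all three operators are Hermitian and we have three mutually orthogonal unit vectors in a three-dimensional space, they form an orthonormal basis.

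The main obstacle is really only the ``$\dim\le 3$'' half of the first step: it requires the full decomposition of $\CC^{2^n}\otimes\CC^{2^n}$ under the diagonal orthogonal action, together with the irreducibility and pairwise inequivalence of the three summands, which I would take from Ref.~\cite{CS06} rather than reprove. The opposite containment, the linear independence, and the explicit normalisation constants are all elementary, so once the commutant has been identified the rest of the argument is essentially bookkeeping.
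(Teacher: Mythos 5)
Your proposal is correct. Note that the paper itself offers no proof of this lemma: it is quoted as a known characterization of the commutant with a pointer to Ref.~\cite{CS06}, which is exactly the source you invoke for the only non-elementary ingredient (the decomposition of $\CC^{2^n}\otimes\CC^{2^n}$ under $g\mapsto g\otimes g$ into three inequivalent irreducibles, equivalently the $k=2$ Brauer--Schur--Weyl statement that the commutant is $\myspan\{\id,\mathrm{SWAP}_n,\ket{\Phi_{\bf 0}}\bra{\Phi_{\bf 0}}\}$). The remaining Gram--Schmidt bookkeeping is carried out correctly: the orthogonality $P_{\rm asym}\ket{\Phi_{\bf 0}}=0$, the overlap $\bra{\Phi_{\bf 0}}P_{\rm sym}\ket{\Phi_{\bf 0}}=1$, and the norm $1-1/d_{\rm sym}$ all check out, so your argument is a valid and slightly more self-contained version of what the paper leaves implicit.
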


To complete the chain of inequalities~\eqref{eq:Haar-integral-result} in the proof of Proposition~\ref{prop:high-ent}, we project $\ket{W_n}\bra{W_n}^{\otimes 2}$ onto each orthogonal component separately. Due to swap-invariance, the projection onto $P_{\rm asym}$ vanishes, leaving us with a linear combination of $P_{\rm sym}$ and $\ket{\Phi_{\bf 0}}\bra{\Phi_{\bf 0}}$. Note that $d_{\rm sym} = 2^n(2^n+1)/2$. 

\section{CDF of states in {\rm CW}}\label{app:CDFs} 
In this section, we prove several properties of the CDF of states in ${\rm CW}$, which we use throughout the letter. They allow us to complete the proof of Proposition~\ref{prop:high-t} and analyse the efficiency and robustness of our protocol. We start with two Lemmas on the continuity of Pauli distributions and the corresponding CDFs that will be needed for the following discussion.

\begin{lemma}\label{lemma:tv0tv}
    Let $\rho,\si$ be quantum states with Pauli distributions $p_\rho,p_\si$. Then,
    \begin{align}
           \sum_{\ba}\left|(\tr\rho^2)\,p_\rho(\ba)-(\tr\si^2)\,p_\si(\ba)\right|&\le2\|\rho-\si\|_{\rm tr}\,,\label{eq:tv0}\\
           \max\{\tr\rho^2,\tr\si^2\}\, {\rm TV}(p_\rho,p_\si)&\le 2\|\rho-\si\|_{\rm tr}\,.\label{eq:tv}
    \end{align}
\end{lemma}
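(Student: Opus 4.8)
\textbf{Proof plan for Lemma~\ref{lemma:tv0tv}.}

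The plan is to prove \eqref{eq:tv0} first and then derive \eqref{eq:tv} from it. The starting point is the observation that the unnormalized Pauli weights can be written in terms of Hilbert--Schmidt inner products: $(\tr\rho^2)\,p_\rho(\ba) = \langle P_\ba\rangle_\rho^2/2^n = \tr(\rho P_\ba)^2/2^n$, and likewise for $\si$. So the left-hand side of \eqref{eq:tv0} is $\frac{1}{2^n}\sum_\ba |\tr(\rho P_\ba)^2 - \tr(\si P_\ba)^2|$. I would factor each summand as a difference of squares, $\tr(\rho P_\ba)^2 - \tr(\si P_\ba)^2 = \big(\tr(\rho P_\ba)-\tr(\si P_\ba)\big)\big(\tr(\rho P_\ba)+\tr(\si P_\ba)\big) = \tr\big((\rho-\si)P_\ba\big)\cdot\tr\big((\rho+\si)P_\ba\big)$, and then apply the Cauchy--Schwarz inequality over the index $\ba$:
\[
\frac{1}{2^n}\sum_\ba \big|\tr((\rho-\si)P_\ba)\big|\,\big|\tr((\rho+\si)P_\ba)\big|
\le \Big(\tfrac{1}{2^n}\sum_\ba \tr((\rho-\si)P_\ba)^2\Big)^{1/2}\Big(\tfrac{1}{2^n}\sum_\ba \tr((\rho+\si)P_\ba)^2\Big)^{1/2}.
\]
By the Pauli (orthonormality) identity $\frac{1}{2^n}\sum_\ba \tr(MP_\ba)^2 = \tr(M^2)$ for Hermitian $M$, the right-hand side equals $\sqrt{\tr((\rho-\si)^2)}\,\sqrt{\tr((\rho+\si)^2)}$. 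Now I bound each factor: $\tr((\rho-\si)^2)=\|\rho-\si\|_{\rm HS}^2 \le \|\rho-\si\|_{\rm tr}^2$ since the Hilbert--Schmidt norm is dominated by the trace norm; and $\tr((\rho+\si)^2) \le \big(\|\rho\|_{\rm HS}+\|\si\|_{\rm HS}\big)^2 \le (1+1)^2 = 4$ using $\tr\rho^2\le 1$, $\tr\si^2\le 1$. Multiplying gives the bound $2\|\rho-\si\|_{\rm tr}$, which is \eqref{eq:tv0}.

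For \eqref{eq:tv}, I would estimate ${\rm TV}(p_\rho,p_\si) = \frac12\sum_\ba|p_\rho(\ba)-p_\si(\ba)|$ by inserting and removing a common purity factor. Writing $a=\tr\rho^2$, $b=\tr\si^2$ and $m=\max\{a,b\}$, one has $a\,p_\rho(\ba) - b\,p_\si(\ba) = m\big(p_\rho(\ba)-p_\si(\ba)\big) - \big((m-a)p_\rho(\ba)-(m-b)p_\si(\ba)\big)$, so by the triangle inequality $m\sum_\ba|p_\rho(\ba)-p_\si(\ba)| \le \sum_\ba|a\,p_\rho(\ba)-b\,p_\si(\ba)| + |m-a|\sum_\ba p_\rho(\ba) + |m-b|\sum_\ba p_\si(\ba) = \sum_\ba|a\,p_\rho(\ba)-b\,p_\si(\ba)| + |m-a| + |m-b|$, using that $p_\rho,p_\si$ are normalized. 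Since $m$ is the max, $|m-a|+|m-b| = |a-b| = |\tr\rho^2-\tr\si^2| = |\tr((\rho-\si)(\rho+\si))| \le \|\rho-\si\|_{\rm HS}\|\rho+\si\|_{\rm HS} \le 2\|\rho-\si\|_{\rm tr}$ by Cauchy--Schwarz and the same norm bounds as above. Combining with \eqref{eq:tv0} yields $m\cdot 2\,{\rm TV}(p_\rho,p_\si) \le 2\|\rho-\si\|_{\rm tr} + 2\|\rho-\si\|_{\rm tr} = 4\|\rho-\si\|_{\rm tr}$, i.e. $m\,{\rm TV}(p_\rho,p_\si)\le 2\|\rho-\si\|_{\rm tr}$, which is \eqref{eq:tv}.

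The only real subtlety, and the step I would be most careful about, is the bookkeeping between normalized and unnormalized Pauli distributions — it is essential to keep the purity prefactors explicit throughout, since $p_\rho$ already contains a $1/(2^n\tr\rho^2)$, and the cleanest route is to work with the unnormalized weights $\langle P_\ba\rangle_\rho^2/2^n$ and only normalize at the very end. Everything else is a routine combination of the difference-of-squares factorization, Cauchy--Schwarz over $\ba$, the Pauli completeness identity, and the elementary norm inequality $\|M\|_{\rm HS}\le\|M\|_{\rm tr}$ together with $\tr\rho^2,\tr\si^2\le1$; I do not anticipate any genuine obstacle beyond getting the constants right.
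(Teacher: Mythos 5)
Your proposal is correct and follows essentially the same route as the paper: Eq.~\eqref{eq:tv0} via the difference-of-squares factorization, Cauchy--Schwarz over $\ba$, the Pauli orthogonality identity $\tfrac{1}{2^n}\sum_\ba \tr(MP_\ba)^2=\tr M^2$, and the norm bounds $\|\rho-\si\|_{\rm HS}\le\|\rho-\si\|_{\rm tr}$, $\tr(\rho+\si)^2\le4$; and Eq.~\eqref{eq:tv} by a triangle-inequality split that reduces it to Eq.~\eqref{eq:tv0} plus the purity-difference bound $|\tr\rho^2-\tr\si^2|\le2\|\rho-\si\|_{\rm tr}$. Your $\max$-based decomposition is just a symmetrized phrasing of the paper's ``assume WLOG $\tr\rho^2\ge\tr\si^2$'' step, and all constants check out.
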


\begin{proof}
   We start by proving Eq.~\eqref{eq:tv0}. To this end, we will use the identity $\sum_{\ba}\langle P_{\ba}\rangle_A^2=2^n\tr A^2$, where $A$ is any Hermitian $n$-qubit operator. We upper-bound
    \[
    \begin{aligned}
    \sum_{\ba}\left|\langle P_{\ba}\rangle_\rho^2-\langle P_{\ba}\rangle_\si^2\right|&=\sum_{\ba}\left|\langle P_{\ba}\rangle_{\rho+\si}\right|\cdot\left|\langle P_{\ba}\rangle_{\rho-\si}\right|\\[0.5mm]
    &\le\sqrt{\sum_{\ba}\langle P_{\ba}\rangle_{\rho+\si}^2}\cdot\sqrt{\sum_{\bb}\langle P_{\bb}\rangle_{\rho-\si}^2}\\[1mm]
    &=2^n\sqrt{\tr(\rho+\si)^2}\cdot\sqrt{\tr(\rho-\si)^2}\\[3mm]
    &\le2\cdot2^n\, \|\rho-\si\|_{\rm tr} \, ,
    \end{aligned}
    \]
    where we have used the Cauchy-Schwarz and Hölder inequalities. Noting that $(\tr\rho^2)p_\rho(\ba)=\langle P_{\ba}\rangle_\rho^2/2^n$, this shows Eq.~\eqref{eq:tv0} from which follows Eq.~\eqref{eq:tv} after applying the triangle inequality. Indeed, assume without loss of generality that $\tr\rho^2\ge\tr\si^2$, and write
    \begin{align*}
           {\rm TV}(p_\rho,p_\si)\le  \hspace{2mm} & \frac12\sum_{\ba}\frac{\left|(\tr\rho^2)\,p_\rho(\ba)-(\tr\si^2)\,p_\si(\ba)\right|}{\tr\rho^2}\\
           +&\frac12\sum_{\ba}\frac{\left|(\tr\si^2)\,p_\si(\ba)-(\tr\rho^2)\,p_\si(\ba)\right|}{\tr\rho^2}\,.
    \end{align*}
    Using Eq.~\eqref{eq:tv0} to bound the first term of the previous inequality, and using the fact that $|\tr\rho^2-\tr\si^2|\le2\|\rho-\si\|_{\rm tr}$, Eq.~\eqref{eq:tv} follows.
\end{proof}

\begin{lemma}
\label{lemma:cont-CDF}
    Let $\rho,\rho'$ be quantum states with CDFs $F_\rho,F_{\rho'}$. Then, for all $\epsilon$
        \[
        F_{\rho'}(\epsilon)\le F_\rho(2\epsilon)+\frac{4\|\rho-\rho'\|_{\rm tr}}{\tr{\rho'}^2}
        \]
\end{lemma}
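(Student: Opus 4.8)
The plan is to relate the event $\{\langle P_\ba\rangle_{\rho'}^2\le\epsilon\}$ under the distribution $p_{\rho'}$ to the analogous event $\{\langle P_\ba\rangle_\rho^2\le 2\epsilon\}$ under $p_\rho$, paying for the switch with the total-variation-type bound from Lemma~\ref{lemma:tv0tv}. Concretely, I would start from the definition
\[
F_{\rho'}(\epsilon)=\sum_{\ba}p_{\rho'}(\ba)\,\theta\bigl(\epsilon-\langle P_\ba\rangle_{\rho'}^2\bigr),
\]
and split the sum according to whether $\langle P_\ba\rangle_\rho^2$ is small or large. For the ``small'' part, $\langle P_\ba\rangle_\rho^2\le 2\epsilon$, I simply bound $p_{\rho'}(\ba)\le p_\rho(\ba)+|p_{\rho'}(\ba)-p_\rho(\ba)|$, so this contributes at most $F_\rho(2\epsilon)$ plus an error. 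For the ``large'' part, $\langle P_\ba\rangle_\rho^2> 2\epsilon$, I want to argue that whenever $\langle P_\ba\rangle_{\rho'}^2\le\epsilon$ as well, the Pauli expectations of $\rho$ and $\rho'$ at $\ba$ must differ substantially, so that $|p_{\rho'}(\ba)-p_\rho(\ba)|$ (weighted by purity) is correspondingly large, and the whole contribution is again controlled by the $\ell^1$-distance of the weighted Pauli distributions.

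The key elementary step is the following pointwise claim: if $\langle P_\ba\rangle_\rho^2\ge 2\epsilon$ and $\langle P_\ba\rangle_{\rho'}^2\le\epsilon$, then $\langle P_\ba\rangle_\rho^2-\langle P_\ba\rangle_{\rho'}^2\ge\epsilon$, hence $\langle P_\ba\rangle_\rho^2\le 2(\langle P_\ba\rangle_\rho^2-\langle P_\ba\rangle_{\rho'}^2)$. Translating expectations into the (purity-weighted) probabilities via $(\tr\rho^2)\,p_\rho(\ba)=\langle P_\ba\rangle_\rho^2/2^n$, this says $(\tr\rho^2)\,p_\rho(\ba)\le 2\bigl|(\tr\rho^2)p_\rho(\ba)-(\tr{\rho'}^2)p_{\rho'}(\ba)\bigr|$. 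Summing the indicator $\theta(\epsilon-\langle P_\ba\rangle_{\rho'}^2)$ over the ``large'' set, and using $p_{\rho'}(\ba)\le p_\rho(\ba)+|p_{\rho'}(\ba)-p_\rho(\ba)|$, one sees the contribution of the large part is at most a constant times $\tfrac{1}{\tr{\rho'}^2}\sum_\ba|(\tr\rho^2)p_\rho(\ba)-(\tr{\rho'}^2)p_{\rho'}(\ba)|$, which by Eq.~\eqref{eq:tv0} is at most $\tfrac{4\|\rho-\rho'\|_{\rm tr}}{\tr{\rho'}^2}$. Combining the two parts and being slightly careful to route every error term through the single quantity $\tfrac{1}{\tr{\rho'}^2}\sum_\ba|(\tr\rho^2)p_\rho(\ba)-(\tr{\rho'}^2)p_{\rho'}(\ba)|$ should give exactly the stated constant $4$.

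The main obstacle I anticipate is purely bookkeeping: making sure that the various appearances of $|p_{\rho'}(\ba)-p_\rho(\ba)|$ versus $|(\tr\rho^2)p_\rho(\ba)-(\tr{\rho'}^2)p_{\rho'}(\ba)|$ are consistently handled — these differ by the factor $\tr{\rho'}^2$ (and a cross term proportional to $|\tr\rho^2-\tr{\rho'}^2|$) — so that the final constant comes out as $4$ and not something larger, and that the $1/\tr{\rho'}^2$ normalization sits in the right place. A clean way to avoid juggling is to work throughout with the weighted quantities $(\tr\rho^2)p_\rho$ and $(\tr{\rho'}^2)p_{\rho'}$, bound everything by $\sum_\ba|(\tr\rho^2)p_\rho(\ba)-(\tr{\rho'}^2)p_{\rho'}(\ba)|\le 2\|\rho-\rho'\|_{\rm tr}$ from Eq.~\eqref{eq:tv0}, and only divide by $\tr{\rho'}^2$ at the very end to pass back to $F_{\rho'}$. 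There is no deep difficulty here — the dyadic ``$2\epsilon$'' slack is precisely what absorbs the perturbation — so I expect the proof to be short once the pointwise inequality above is in hand.
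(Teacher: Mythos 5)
Your proposal follows essentially the same route as the paper's proof: the same dyadic split of the sum defining $F_{\rho'}(\epsilon)$ according to whether $\langle P_\ba\rangle_\rho^2\le 2\epsilon$, the same pointwise observation that on the complementary set $\langle P_\ba\rangle_{\rho'}^2\le\epsilon\le\langle P_\ba\rangle_\rho^2-\langle P_\ba\rangle_{\rho'}^2$, and the same reliance on Lemma~\ref{lemma:tv0tv} to control both error terms. One bookkeeping caveat, since you flag the constant yourself: for the ``small'' part, bounding $p_{\rho'}\le p_\rho+|p_{\rho'}-p_\rho|$ pointwise and then converting $\sum_\ba|p_{\rho'}(\ba)-p_\rho(\ba)|$ into the purity-weighted $\ell^1$ quantity costs an extra cross term proportional to $|\tr\rho^2-\tr\rho'^2|$, and the total comes out as $6\|\rho-\rho'\|_{\rm tr}/\tr\rho'^2$ rather than $4$. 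The paper instead bounds the small part by $F_\rho(2\epsilon)+{\rm TV}(p_\rho,p_{\rho'})$ using the one-sided characterization of total variation (the excess of $p_{\rho'}$ over $p_\rho$ on any set is at most ${\rm TV}$), and then invokes Eq.~\eqref{eq:tv} --- which already absorbs the cross term into the $\max\{\tr\rho^2,\tr\rho'^2\}$ normalization --- to get $2\|\rho-\rho'\|_{\rm tr}/\tr\rho'^2$ for that piece; combined with the $2\|\rho-\rho'\|_{\rm tr}/\tr\rho'^2$ from the large part this yields exactly the stated constant $4$. With that substitution your argument reproduces the lemma verbatim.
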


\begin{proof}
We consider the following subsets of Pauli strings:
\[
\begin{aligned}
\cA&=\{\ba:\langle P_{\ba}\rangle_{\rho'}^2\le \epsilon\}\\
\cB&=\{\ba:\langle P_{\ba}\rangle_\rho^2\le 2\epsilon\}.
\end{aligned}
\]
We further write $F_{\rho'}(\epsilon)=\sum_{\ba\in\cA}p_{\rho'}(\ba)$, and split the sum into two parts:
\begin{equation}\label{eq:2sums}
F_{\rho'}(\epsilon)=\sum_{\ba\in\cA\cap\cB}p_{\rho'}(\ba)+\sum_{\ba\in\cA\cap\bar\cB}p_{\rho'}(\ba),
\end{equation}
where $\bar\cB$ is the complement of $\cB$. Next, we upper-bound both contributions separately. For one,
\begin{multline}\label{eq:sumB}
\sum_{\ba\in\cA\cap\cB}p_{\rho'}(\ba)\le\sum_{\ba\in\cB}p_{\rho'}(\ba)\le F_\rho(2\epsilon)+{\rm TV}(p_\rho,p_{\rho'}),
\end{multline}
by the definition of the TV-distance.
Secondly,
\begin{align}\label{eq:sumBbar}
\sum_{\ba\in\cA\cap\bar\cB}p_{\rho'}(\ba)&\le\sum_{\ba\in\cA\cap\bar\cB}\frac{\langle P_\ba\rangle_\rho^2-\langle P_\ba\rangle_{\rho'}^2}{2^n\tr{\rho'}^2} \nonumber \\
& = \sum_{\ba\in\cA\cap\bar\cB}\frac{(\tr\rho^2)p_\rho(\ba)-(\tr{\rho'}^2)p_{\rho'}(\ba)}{\tr{\rho'}^2},
\end{align}
since $\langle P_\ba\rangle_\rho^2\ge2\langle P_\ba\rangle_{\rho'}^2$ whenever $\ba\in\cA\cap\bar\cB$. The statement follows by using Eqs.~\eqref{eq:tv} and~\eqref{eq:tv0} to upper-bound Eqs.~\eqref{eq:sumB} and~\eqref{eq:sumBbar}, respectively, and substituting into Eq.~\eqref{eq:2sums}.
\end{proof}

Let us now discuss the CDF of states in CW. Since the CDF is invariant under Cliffords, it suffices to consider the states $\ket{W_n}$ of Definition~\ref{def:CW}. A simple calculation yields the following Pauli expectation values:
\[
\bra{W_n}P_\ba\ket{W_n}=\begin{cases}
    1-\dfrac{2z_\ba}{n}\,,&(x_\ba,y_\ba)=(0,0)\\[2mm]
    \dfrac{2}{n}\,,&(x_\ba,y_\ba)\in\{(2,0),(0,2)\}\\[3mm]
    0\,,&{\rm otherwise}
\end{cases}
\]
where $x_\ba=|\{i:P_{a_i}=X\}|$ counts the number of tensor factors $X$ in $P_\ba=P_{a_1}\otimes\cdots\otimes P_{a_n}$, and similarly for $y_\ba$ and $z_\ba$. In particular,
\begin{equation}\label{eq:CDF-W}
    F_\rho(\ep)=0,
\end{equation}
for $\ep<4/n^2$ and $\rho\in{\rm CW}(n)$. Furthermore, for some $\rho\in{\rm CW}(n)$ and for arbitrary $\rho'$ with $\|\rho-\rho'\|_{\rm tr}\le1/10$, Lemma~\ref{lemma:cont-CDF} and Eq.~\eqref{eq:CDF-W} imply that for $\ep<2/n^2$
\begin{equation}\label{eq:CDF-W-approx}
    F_{\rho'}(\ep)\le\frac{4\|\rho-\rho'\|_{\rm tr}}{1-2\|\rho-\rho'\|_{\rm tr}}\le5\|\rho-\rho'\|_{\rm tr}\, .
\end{equation}

Lastly, an upper bound on the number of Pauli strings $P_\ba$ with $|\langle P_\ba\rangle_\rho|>3/4$ for $\rho \in {\rm CW}(n)$ will be useful in the proof of Proposition $\ref{prop:high-t}$. Since $|1-2z/n|>3/4$ if and only if $z<n/8$ or $z>7n/8$, one has
\begin{equation}\label{eq:counting-Ps}
    |\{\ba:|\langle P_\ba\rangle_\rho|>3/4\}|\leq 2\!\sum_{k=0}^{\lfloor n/8\rfloor}\binom{n}{k}\le 2^{nH(1/8)+1}<2^{3n/4},
\end{equation}
where $H(x)=-x\log x-(1-x)\log(1-x)$, by a well-known bound of the sum of binomial coefficients in terms of the binary entropy~\cite{galvin2014tutoriallecturesentropycounting}.

\section{Robustness of $\rm{rDIPE}$}\label{app:robustness}
In this section we prove the robustness result in the main text for states in ${\rm CW}$. To this end, consider two sequences $\{\rho'\}$, $\{\sigma'\}$ that are within trace distance $\tau$ of $\{\rho\}$, $\{\sigma\}\in{\rm CW}$. We have
\begin{align}\label{eq:Delta-bound}
    \Delta \equiv{\rm TV}(q'_{\rm mix},p'_{\rm mix})&\le\frac{{\rm TV}(q_{\rho_n'},p_{\rho_n'})}2+\frac{{\rm TV}(q_{\si_n'},p_{\si_n'})}2 \nonumber\\[2mm]
    &\hspace{-15mm}\le\frac{3\|\rho_n-\rho_n'\|_{\rm tr}}2+\frac{3\|\si_n-\si_n'\|_{\rm tr}}2 \leq 3\tau \,.
\end{align}
To derive the previous expression, we have used the triangle inequality ${\rm TV}(q_{\rho'},p_{\rho'})\le{\rm TV}(q_{\rho'},q_{\rho})+{\rm TV}(p_{\rho},p_{\rho'})$ since for real pure states $p_\rho=q_\rho$, and similarly for the term relative to $\sigma$. Finally, we used Eq.~\eqref{eq:tv} above and the fact that ${\rm TV}(q_{\rho'},q_{\rho})\le\|\rho-\rho'\|_{\rm tr}$ by the data processing inequality~\footnote{Indeed, ${\rm TV}(q_{\rho'},q_{\rho})\le\|\rho\otimes\rho-\rho'\otimes\rho'\|_{\rm tr}/2$ since the Bell distribution is the (diagonal) output of a quantum channel acting on two copies of a state, and $\|\rho\otimes\rho-\rho'\otimes\rho'\|_{\rm tr}\le2\|\rho-\rho'\|_{\rm tr}$}.

Now, for all $n,\tau>0$, consider $\ep_1,\ep_2$ given by $\ep_1=\tau/8$ and $\ep_2=\min\{(\tau/8)^2,1/n^2\}$, such that $4\ep_1+4\sqrt{\ep_2}\leq\tau$. From Eq.~\eqref{eq:CDF-W-approx} we have that the CDF $F(\epsilon)$ is small whenever $\epsilon$ is inverse-polynomially small. In particular, $F_{\rho'_n}(\ep_2)\le5\|\rho_n-\rho'_n\|_{\rm tr}\le5\tau$, since $\ep_2\leq 1/n^2 <2/n^2$. Similarly, $F_{\si'_n}(\ep_2)\le5\tau$ as well. Together with Eq.~\eqref{eq:Delta-bound}, this implies 
\begin{equation*}
4\ep_1+4\sqrt{\ep_2}+F_{\rho'_n}(\ep_2)+F_{\si'_n}(\ep_2)\\+6\Delta\le 29\tau \,.
\end{equation*}
Substituting into Eq.~\eqref{eq:performance} gives
\begin{multline*}\label{eq:sets-relation_robustness}
    \mathbb{P}\Big\{\left|f(\rho'_n,\si'_n,N_1,N_2)-{\rm c}(\rho'_n,\si'_n)\right|>29\tau\Big\}\\    <4\exp\left(-2\epsilon_1^2N_1\right)+4N_1\exp\left(-\epsilon_2^2N_2/2\right)\,.
\end{multline*}
The previous expression shows that there exist $N_1,N_2={\rm poly}(n,1/\tau,\de)$ such that Eq.~\eqref{eq:efficient-robust} holds when $\{\rho\},\{\si\}\in{\rm CW}$, for $k=29$.

\bibliography{refs}
\end{document}